\newtheorem{cor}{Corollary}[section]
\newtheorem{prop}{Proposition}[section]
\newtheorem{example}{Example}
\newtheorem{defn}{Definition}[section]
\newtheorem{thm}{Theorem}
\newtheorem{rem}{Remark}
\newtheorem{notation}{Notation}
\numberwithin{equation}{section}
\begin{document}
\newcommand{\beqa}{\begin{eqnarray}}
\newcommand{\eeqa}{\end{eqnarray}}
\newcommand{\thmref}[1]{Theorem~\ref{#1}}
\newcommand{\secref}[1]{Sect.~\ref{#1}}
\newcommand{\lemref}[1]{Lemma~\ref{#1}}
\newcommand{\propref}[1]{Proposition~\ref{#1}}
\newcommand{\corref}[1]{Corollary~\ref{#1}}
\newcommand{\remref}[1]{Remark~\ref{#1}}
\newcommand{\er}[1]{(\ref{#1})}
\newcommand{\nc}{\newcommand}
\newcommand{\rnc}{\renewcommand}

\nc{\cal}{\mathcal}

\nc{\goth}{\mathfrak}
\rnc{\bold}{\mathbf}
\renewcommand{\frak}{\mathfrak}
\renewcommand{\Bbb}{\mathbb}

\def\cN{{\cal N}}
\newcommand{\hs}[1]{\hspace{#1 mm}}
\newcommand{\mb}[1]{\hs{4}\mbox{#1}\hs{4}}
\newcommand{\id}{\text{id}}
\nc{\Cal}{\mathcal}
\nc{\Xp}[1]{X^+(#1)}
\nc{\Xm}[1]{X^-(#1)}
\nc{\on}{\operatorname}
\nc{\ch}{\mbox{ch}}
\nc{\Z}{{\bold Z}}
\nc{\J}{{\mathcal J}}
\nc{\C}{{\bold C}}
\nc{\Q}{{\bold Q}}
\renewcommand{\P}{{\mathcal P}}
\nc{\N}{{\Bbb N}}
\nc\beq{\begin{equation}}
\nc\enq{\end{equation}}
\nc\lan{\langle}
\nc\ran{\rangle}
\nc\bsl{\backslash}
\nc\mto{\mapsto}
\nc\lra{\leftrightarrow}
\nc\hra{\hookrightarrow}
\nc\sm{\smallmatrix}
\nc\esm{\endsmallmatrix}
\nc\sub{\subset}
\nc\ti{\tilde}
\nc\nl{\newline}
\nc\fra{\frac}
\nc\und{\underline}
\nc\ov{\overline}
\nc\ot{\otimes}
\nc\bbq{\bar{\bq}_l}
\nc\bcc{\thickfracwithdelims[]\thickness0}
\nc\ad{\text{\rm ad}}
\nc\Ad{\text{\rm Ad}}
\nc\Hom{\text{\rm Hom}}
\nc\End{\text{\rm End}}
\nc\Ind{\text{\rm Ind}}
\nc\Res{\text{\rm Res}}
\nc\Ker{\text{\rm Ker}}
\rnc\Im{\text{Im}}
\nc\sgn{\text{\rm sgn}}
\nc\tr{\text{\rm tr}}
\nc\Tr{\text{\rm Tr}}
\nc\supp{\text{\rm supp}}
\nc\card{\text{\rm card}}
\nc\bst{{}^\bigstar\!}
\nc\he{\heartsuit}
\nc\clu{\clubsuit}
\nc\spa{\spadesuit}
\nc\di{\diamond}
\nc\cW{\cal W}
\nc\cG{\cal G}
\nc\al{\alpha}
\nc\bet{\beta}
\nc\ga{\gamma}
\nc\de{\delta}
\nc\ep{\epsilon}
\nc\io{\iota}
\nc\om{\omega}
\nc\si{\sigma}
\rnc\th{\theta}
\nc\ka{\kappa}
\nc\la{\lambda}
\nc\ze{\zeta}

\nc\vp{\varpi}
\nc\vt{\vartheta}
\nc\vr{\varrho}

\nc\Ga{\Gamma}
\nc\De{\Delta}
\nc\Om{\Omega}
\nc\Si{\Sigma}
\nc\Th{\Theta}
\nc\La{\Lambda}

\nc\boa{\bold a}
\nc\bob{\bold b}
\nc\boc{\bold c}
\nc\bod{\bold d}
\nc\boe{\bold e}
\nc\bof{\bold f}
\nc\bog{\bold g}
\nc\boh{\bold h}
\nc\boi{\bold i}
\nc\boj{\bold j}
\nc\bok{\bold k}
\nc\bol{\bold l}
\nc\bom{\bold m}
\nc\bon{\bold n}
\nc\boo{\bold o}
\nc\bop{\bold p}
\nc\boq{\bold q}
\nc\bor{\bold r}
\nc\bos{\bold s}
\nc\bou{\bold u}
\nc\bov{\bold v}
\nc\bow{\bold w}
\nc\boz{\bold z}

\nc\ba{\bold A}
\nc\bb{\bold B}
\nc\bc{\bold C}
\nc\bd{\bold D}
\nc\be{\bold E}
\nc\bg{\bold G}
\nc\bh{\bold H}
\nc\bi{\bold I}
\nc\bj{\bold J}
\nc\bk{\bold K}
\nc\bl{\bold L}
\nc\bm{\bold M}
\nc\bn{\bold N}
\nc\bo{\bold O}
\nc\bp{\bold P}
\nc\bq{\bold Q}
\nc\br{\bold R}
\nc\bs{\bold S}
\nc\bt{\bold T}
\nc\bu{\bold U}
\nc\bv{\bold V}
\nc\bw{\bold W}
\nc\bz{\bold Z}
\nc\bx{\bold X}

\nc\ca{\mathcal A}
\nc\cb{\mathcal B}
\nc\cc{\mathcal C}
\nc\cd{\mathcal D}
\nc\ce{\mathcal E}
\nc\cf{\mathcal F}
\nc\cg{\mathcal G}
\rnc\ch{\mathcal H}
\nc\ci{\mathcal I}
\nc\cj{\mathcal J}
\nc\ck{\mathcal K}
\nc\cl{\mathcal L}
\nc\cm{\mathcal M}
\nc\cn{\mathcal N}
\nc\co{\mathcal O}
\nc\cp{\mathcal P}
\nc\cq{\mathcal Q}
\nc\car{\mathcal R}
\nc\cs{\mathcal S}
\nc\ct{\mathcal T}
\nc\cu{\mathcal U}
\nc\cv{\mathcal V}
\nc\cz{\mathcal Z}
\nc\cx{\mathcal X}
\nc\cy{\mathcal Y}

\nc\e[1]{E_{#1}}
\nc\ei[1]{E_{\delta - \alpha_{#1}}}
\nc\esi[1]{E_{s \delta - \alpha_{#1}}}
\nc\eri[1]{E_{r \delta - \alpha_{#1}}}
\nc\ed[2][]{E_{#1 \delta,#2}}
\nc\ekd[1]{E_{k \delta,#1}}
\nc\emd[1]{E_{m \delta,#1}}
\nc\erd[1]{E_{r \delta,#1}}

\nc\ef[1]{F_{#1}}
\nc\efi[1]{F_{\delta - \alpha_{#1}}}
\nc\efsi[1]{F_{s \delta - \alpha_{#1}}}
\nc\efri[1]{F_{r \delta - \alpha_{#1}}}
\nc\efd[2][]{F_{#1 \delta,#2}}
\nc\efkd[1]{F_{k \delta,#1}}
\nc\efmd[1]{F_{m \delta,#1}}
\nc\efrd[1]{F_{r \delta,#1}}

\nc\fa{\frak a}
\nc\fb{\frak b}
\nc\fc{\frak c}
\nc\fd{\frak d}
\nc\fe{\frak e}
\nc\ff{\frak f}
\nc\fg{\frak g}
\nc\fh{\frak h}
\nc\fj{\frak j}
\nc\fk{\frak k}
\nc\fl{\frak l}
\nc\fm{\frak m}
\nc\fn{\frak n}
\nc\fo{\frak o}
\nc\fp{\frak p}
\nc\fq{\frak q}
\nc\fr{\frak r}
\nc\fs{\frak s}
\nc\ft{\frak t}
\nc\fu{\frak u}
\nc\fv{\frak v}
\nc\fz{\frak z}
\nc\fx{\frak x}
\nc\fy{\frak y}

\nc\fA{\frak A}
\nc\fB{\frak B}
\nc\fC{\frak C}
\nc\fD{\frak D}
\nc\fE{\frak E}
\nc\fF{\frak F}
\nc\fG{\frak G}
\nc\fH{\frak H}
\nc\fJ{\frak J}
\nc\fK{\frak K}
\nc\fL{\frak L}
\nc\fM{\frak M}
\nc\fN{\frak N}
\nc\fO{\frak O}
\nc\fP{\frak P}
\nc\fQ{\frak Q}
\nc\fR{\frak R}
\nc\fS{\frak S}
\nc\fT{\frak T}
\nc\fU{\frak U}
\nc\fV{\frak V}
\nc\fZ{\frak Z}
\nc\fX{\frak X}
\nc\fY{\frak Y}
\nc\tfi{\ti{\Phi}}
\nc\bF{\bold F}
\rnc\bol{\bold 1}

\nc\ua{\bold U_\A}

\def\cA{{\cal A}}   \def\cB{{\cal B}}   \def\cC{{\cal C}}
\def\cD{{\cal D}}   \def\cE{{\cal E}}   \def\cF{{\cal F}}  
\def\cG{{\cal G}}   \def\cH{{\cal H}}   \def\cI{{\cal I}}
\def\cJ{{\cal J}}   \def\cK{{\cal K}}   \def\cL{{\cal L}}
\def\cM{{\cal M}}   \def\cN{{\cal N}}   \def\cO{{\cal O}}
\def\cP{{\cal P}}   \def\cQ{{\cal Q}}   \def\cR{{\cal R}}
\def\cS{{\cal S}}   \def\cT{{\cal T}}   \def\cU{{\cal U}}
\def\cV{{\cal V}}   \def\cW{{\cal W}}   \def\cX{{\cal X}}
\def\cY{{\cal Y}}   \def\cZ{{\cal Z}}  \def\cRR{{\cal {\mathbb R}}}

\newcommand{\ben}{\begin{eqnarray}}
\newcommand{\een}{\end{eqnarray}}
\newcommand{\nonu}{\nonumber \\} 
\newcommand{\FF}{{\mathbb F}}
\newcommand{\A}{{\mathbb A}}
\newcommand{\GG}{{\mathbb G}}

\nc\qinti[1]{[#1]_i}
\nc\q[1]{[#1]_q}
\nc\xpm[2]{E_{#2 \delta \pm \alpha_#1}}  
\nc\xmp[2]{E_{#2 \delta \mp \alpha_#1}}
\nc\xp[2]{E_{#2 \delta + \alpha_{#1}}}
\nc\xm[2]{E_{#2 \delta - \alpha_{#1}}}
\nc\hik{\ed{k}{i}}
\nc\hjl{\ed{l}{j}}
\nc\qcoeff[3]{\left[ \begin{smallmatrix} {#1}& \\ {#2}& \end{smallmatrix}
\negthickspace \right]_{#3}}
\nc\qi{q}
\nc\qj{q}

\nc\ufdm{{_\ca\bu}_{\rm fd}^{\le 0}}


\nc\isom{\cong} 

\nc{\pone}{{\Bbb C}{\Bbb P}^1}
\nc{\pa}{\partial}
\def\H{\mathcal H}
\def\L{\mathcal L}
\nc{\F}{{\mathcal F}}
\nc{\Sym}{{\goth S}}
\nc{\arr}{\rightarrow}
\nc{\larr}{\longrightarrow}

\nc{\ri}{\rangle}
\nc{\lef}{\langle}
\nc{\W}{{\mathcal W}}
\nc{\uqatwoatone}{{U_{q,1}}(\su)}
\nc{\uqtwo}{U_q(\goth{sl}_2)}
\nc{\dij}{\delta_{ij}}
\nc{\divei}{E_{\alpha_i}^{(n)}}
\nc{\divfi}{F_{\alpha_i}^{(n)}}
\nc{\Lzero}{\Lambda_0}
\nc{\Lone}{\Lambda_1}
\nc{\ve}{\varepsilon}
\nc{\phioneminusi}{\Phi^{(1-i,i)}}
\nc{\phioneminusistar}{\Phi^{* (1-i,i)}}
\nc{\phii}{\Phi^{(i,1-i)}}
\nc{\Li}{\Lambda_i}
\nc{\Loneminusi}{\Lambda_{1-i}}
\nc{\vtimesz}{v_\ve \otimes z^m}

\nc{\asltwo}{\widehat{\goth{sl}_2}}
\nc\ag{\widehat{\goth{g}}}  
\nc\teb{\tilde E_\boc}
\nc\tebp{\tilde E_{\boc'}}

\title[A note on the $O_q(\widehat{sl_2})$ algebra]{A note on the $O_q(\widehat{sl_2})$ algebra}
\author{P. Baseilhac}
\address{Laboratoire de Math\'ematiques et Physique Th\'eorique CNRS/UMR 6083,
          F\'ed\'eration Denis Poisson, Universit\'e de Tours, Parc de Grammont, 37200 Tours, FRANCE}
\email{baseilha@lmpt.univ-tours.fr}

\author{S. Belliard}
\address{Istituto Nazionale di Fisica Nucleare, Sezione di Bologna, Via Irnerio 46,  40126 Bologna,  Italy}
\email{belliard@bo.infn.it}

\begin{abstract}
An explicit homomorphism that relates the elements of the infinite dimensional non-Abelian algebra generating $O_q(\widehat{sl_2})$ currents and the standard generators of the $q-$Onsager algebra is proposed. Two straightforward applications of the result are then considered: First, for the class of quantum integrable models which integrability condition originates in the $q-$Onsager spectrum generating algebra, the infinite $q-$deformed Dolan-Grady hierarchy is derived - bypassing the transfer matrix formalism. Secondly, higher Askey-Wilson relations that arise in the study of symmetric special functions generalizing the Askey-Wilson $q-$orthogonal polynomials are proposed.
\end{abstract}

\maketitle

\vskip -0.5cm

{\small MSC:\ 81R50;\ 81R10;\ 81U15.}

{{\small  {\it \bf Keywords}: Current algebra, $q-$Onsager algebra, Quantum integrable models, Special functions.}}
\vspace{0cm}

\section{Introduction}
Current algebras are known to play an important role in the theory of quantum integrable systems and, more generally, in the study of mathematical structures such as quantum groups. In particular, they have found useful and numerous applications in the derivation of correlation functions of physical observables in conformal field theory \cite{KZ} or quantum integrable spin chain \cite{Jimbo1,vertex}. Among these, Drinfeld current algebras and their representation theory have attracted much attention. Their simplest representative associated with the $U_q(\widehat{sl_2})$ algebra is, for instance, at the root of the vertex operators' solution to the thermodynamic limit of the XXZ spin chain. Indeed, in this non-perturbative approach the isomorphism between the Yang-Baxter's algebraic structure \cite{FRT1}, the first Drinfeld-Jimbo presentation (Chevalley basis) \cite{D1,J1} and second  Drinfeld's presentation (current algebra) \cite{D2} of $U_q(\widehat{sl_2})$ plays a crucial role. Note that explicit homomorphisms between these presentations were proposed in \cite{RS,DF} and \cite{Tol,Lev,Dam} (see also \cite{Be}) based on Luztig's automorphism \cite{L} (see also \cite{Jin}). \vspace{1mm} 

Another type of current algebra has recently been proposed in connection with the reflection equation algebra. Similarly to Drinfeld current algebras, it possesses promising applications in the study of quantum integrable systems, for instance the half-infinite quantum XXZ open spin chain with a non-diagonal integrable boundary. Introduced in \cite{BSh1}, it is called $O_q(\widehat{sl_2})$ which generating elements satisfy the defining relations (\ref{qo1})-(\ref{qo11}). By analogy with Drinfeld's second presentation of $U_q(\widehat{sl_2})$, this new current algebra also admits two different types of presentations: as shown in Theorem 3 of \cite{BSh1} it is isomorphic to the reflection equation algebra with $U_q(\widehat{sl_2})$ $R-$matrix as well as a special case \cite{Ter03} of tridiagonal algebras \cite{Ter93} called the $q-$Onsager algebra\footnote{Note that whereas the reflection equation algebra has been widely studied in the litterature since \cite{Cher84,Skly88}, the algebra (\ref{qDG}) deserves further investigations. More details about mathematical aspects of this algebra and its representation theory can be found in \cite{INT} and references therein.} $\mathbb{T}$ which defining relations are given by (\ref{qDG}). Related with the subject of mathematical physics, looking for an exact solution\footnote{In the spirit of \cite{Onsager,XY,Ahn,Gehlen,Baxter,Gehlen2}, \cite{BK1,BK2} and \cite{Jimbo1}, for instance. In the case of the XXZ open spin chain with {\it generic} integrable boundary conditions, such an alternative to the Bethe ansatz approach is necessary.} to quantum integrable systems with a spectrum generating algebra - or even possibly a hidden symmetry of the Hamiltonian -  of the form (\ref{qDG}) clearly motivates a further study of the algebraic structures (\ref{qo1})-(\ref{qo11}) and (\ref{qDG}).
In view of past experiences, either in mathematics or regarding applications to physics, a better knowledge of the $O_q(\widehat{sl_2})$ current algebra and its {\it explicit} relation with ${\mathbb T}$ is, in particular, highly desirable.\vspace{1mm} 

It is the aim of the present letter to propose an explicit homomorphism relating $O_q(\widehat{sl_2})$ and ${\mathbb T}$.  Besides its own interest in mathematics - one application is considered at the end of this letter -, our result 
finds a straightforward application in the study of quantum integrable systems with a spectrum generating algebra of the form (\ref{qDG}). Indeed, although the existence of a hierarchy of mutually commuting quantities generalizing the Dolan-Grady hierarchy \cite{DG} has been previously conjectured based on the transfer matrix formalism \cite{B1,Bas2,BK,BK1}, a systematic procedure to derive recursively higher elements of the hierarchy as polynomials\footnote{In the case of conformal field theory, mutually commuting elements written as polynomials in the generators of the Virasoro algebra have been derived in \cite{Gervais,Sas}.} in the standard generators ${\textsf A},{\textsf A}^*$ remained an open problem that is solved explicitly in the present letter. 
The interest of the $q-$Dolan-Grady hierarchy presented here relies on the fact that it follows solely from the properties of the algebraic structures involved, contrary to previous approaches which were based on the transfer matrix formalism. As a consequence, whatever the number of dimensions or nature (lattice or continuum) of space-time of the quantum integrable model with spectrum generating algebra ${\mathbb T}$ is, the polynomial structure of the elements of the $q-$Dolan-Grady hierarchy here presented remains unchanged. 
\vspace{1mm} 

The paper is organized as follows. In the next Section, a recursive formula is proposed: given $k$, higher currents' generators $\{{\cW}_{-k},{\cW}_{k+1},{\cG}_{k+1},{\tilde{\cG}}_{k+1}|k\in {\mathbb Z}_+\}$ are written as quadratic combinations of lowest ones. It is shown to be unique, and induces an explicit homomorhism from $O_{q}(\widehat{sl_2})$ to $\mathbb{T}$. Two straightforward applications of our result are then presented in the last Section.  On one hand, using the recursive formula a complete set of mutually commuting quantities that generalize to the $q-$deformed case the Dolan-Grady hierarchy \cite{DG} is exhibited. The first quantities are written explicitely, and agree with the ones previously conjectured from the transfer matrix formalism. At $q=1$, they also coincide with the known results of Dolan-Grady \cite{DG}. On the other hand, the recursive formula allows to derive explicitly {\it higher} Askey-Wilson relations which may serve as defining relations for symmetric special functions generalizing $q-$orthogonal polynomials of the Askey scheme. Note that the extension of our work to other classical Lie algebra - although technically more complicated - is an interesting and open problem. Its starting point - a generalization of (\ref{qDG}) - has been recently introduced in \cite{BB}.
\vspace{0mm}  
\begin{notation}
In this paper, ${\mathbb C}$, ${\mathbb Z}$ denote the field of complex numbers and integers, respectively. We denote ${\mathbb Z}_+$ for nonnegative integers and ${\mathbb C}^*={\mathbb C}/\{0\}$. We introduce the $q-$commutator $\big[X,Y\big]_q=qXY-q^{-1}YX$ where $q$ is the deformation parameter, assumed not to be a root of unity. 
\end{notation}

\section{A recursive formula for the higher generators of $O_{q}(\widehat{sl_2})$}
The defining relations of the new current algebra $O_{q}(\widehat{sl_2})$ follow from the spectral parameter dependent's reflection equation algebra associated with the $U_{q}(\widehat{sl_2})$ $R-$matrix \cite{BSh1}. Similarly to the case of Drinfeld current algebras, the currents can be expanded as formal power series over an infinite set of generators denoted $\{{\cW}_{-k},{\cW}_{k+1},{\cG}_{k+1},{\tilde{\cG}}_{k+1}|k\in {\mathbb Z}_+\}$ which commutation relations were first conjectured in \cite{BK}.
\begin{thm}[see \cite{BSh1}] The current algebra $O_{q}(\widehat{sl_2})$ is isomorphic to the associative algebra with parameter $\rho\in{\mathbb C}^*$, unit $1$ and generators $\{{\cW}_{-k},{\cW}_{k+1},{\cG}_{k+1},{\tilde{\cG}}_{k+1}|k\in {\mathbb Z}_+\}$ satisfying:
\begin{align}
\big[{\cW}_0,{\cW}_{k+1}\big]=\big[{\cW}_{-k},{\cW}_{1}\big]=\frac{1}{(q+q^{-1})}\big({\tilde{\cG}_{k+1} } - {{\cG}_{k+1}}\big)\ ,\label{qo1}\\
\big[{\cW}_0,{\cG}_{k+1}\big]_q=\big[{\tilde{\cG}}_{k+1},{\cW}_{0}\big]_q=\rho{\cW}_{-k-1}-\rho{\cW}_{k+1}\ ,\label{qo2}\\
\big[{\cG}_{k+1},{\cW}_{1}\big]_q=\big[{\cW}_{1},{\tilde{\cG}}_{k+1}\big]_q=\rho{\cW}_{k+2}-\rho{\cW}_{-k}\ ,\label{qo3}\\
\big[{\cW}_{-k},{\cW}_{-l}\big]=0\ ,\quad 
\big[{\cW}_{k+1},{\cW}_{l+1}\big]=0\ ,\label{qo4}\\
%
%
\big[{\cW}_{-k},{\cW}_{l+1}\big]
+\big[{{\cW}}_{k+1},{\cW}_{-l}\big]=0\ ,\label{qo5}\\
\big[{\cW}_{-k},{\cG}_{l+1}\big]
+\big[{{\cG}}_{k+1},{\cW}_{-l}\big]=0\ ,\label{qo6}\\
\big[{\cW}_{-k},{\tilde{\cG}}_{l+1}\big]
+\big[{\tilde{\cG}}_{k+1},{\cW}_{-l}\big]=0\ ,\label{qo7}\\
\big[{\cW}_{k+1},{\cG}_{l+1}\big]
+\big[{{\cG}}_{k+1},{\cW}_{l+1}\big]=0\ ,\label{qo8}\\
\big[{\cW}_{k+1},{\tilde{\cG}}_{l+1}\big]
+\big[{\tilde{\cG}}_{k+1},{\cW}_{l+1}\big]=0\ ,\label{qo9}\\
\big[{\cG}_{k+1},{\cG}_{l+1}\big]=0\ ,\quad   \big[{\tilde{\cG}}_{k+1},\tilde{{\cG}}_{l+1}\big]=0\ ,\label{qo10}\\
\big[{\tilde{\cG}}_{k+1},{\cG}_{l+1}\big]
+\big[{{\cG}}_{k+1},\tilde{{\cG}}_{l+1}\big]=0\ .\label{qo11}
\end{align}
\end{thm}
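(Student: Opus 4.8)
The standard route to such a presentation theorem is through the reflection equation, and I would follow it. Recall, following \cite{BSh1}, that $O_q(\widehat{sl_2})$ is the unital associative algebra generated by the modes of a $2\times 2$ matrix ${\mathcal K}(u)$ of currents subject to the spectral-parameter dependent reflection equation
\[
R_{12}(u/v)\,{\mathcal K}_1(u)\,R_{21}(uv)\,{\mathcal K}_2(v)\;=\;{\mathcal K}_2(v)\,R_{12}(uv)\,{\mathcal K}_1(u)\,R_{21}(u/v)\ ,
\]
where $R(u)$ is the trigonometric $U_q(\widehat{sl_2})$ $R$-matrix and $\rho\in{\mathbb C}^{*}$ enters through the normalisation of ${\mathcal K}(u)$. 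The plan is: first, parametrise the four entries of ${\mathcal K}(u)$ by four generating series ${\mathbf W}_{+}(u),{\mathbf W}_{-}(u),{\mathbf G}(u),\widetilde{\mathbf G}(u)$ whose coefficients are, respectively, the four families $\{{\cW}_{k+1}\},\{{\cW}_{-k}\},\{{\cG}_{k+1}\},\{\tilde{\cG}_{k+1}\}$ appearing in the statement --- the passage between the modes of the entries of ${\mathcal K}(u)$ and these four families being an invertible $\mathbb{C}$-linear change of generating set --- and, second, show that the two-sided ideal cut out by the reflection equation in the free algebra on $\{{\cW}_{-k},{\cW}_{k+1},{\cG}_{k+1},\tilde{\cG}_{k+1}\}$ coincides with the ideal generated by (\ref{qo1})--(\ref{qo11}) (the relations first conjectured in \cite{BK}). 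Once both inclusions of ideals are established the theorem follows, the two algebras being then literally the same quotient.

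To extract the relations I would expand the reflection equation entry by entry. Since the $U_q(\widehat{sl_2})$ $R$-matrix has only finitely many nonzero blocks, each matrix entry of the two sides is, for fixed currents, a Laurent polynomial in $u,v$, so collecting the coefficient of each monomial $u^{a}v^{b}$ terminates. The $(1,1)$ and $(2,2)$ entries should reproduce the commutativity relations (\ref{qo4}) and the symmetrised cross-relation (\ref{qo5}); the off-diagonal entries should reproduce the mixed relations (\ref{qo1})--(\ref{qo3}), the $q$-dependence of the off-diagonal $R$-matrix entries being exactly what replaces the naive commutators by the $q$-commutators in (\ref{qo2})--(\ref{qo3}) and what produces the inhomogeneous terms $\rho({\cW}_{-k-1}-{\cW}_{k+1})$ and $\rho({\cW}_{k+2}-{\cW}_{-k})$; and the remaining relations (\ref{qo6})--(\ref{qo11}) should emerge as the symmetrised combinations obtained by pairing the coefficients of conjugate monomials in $u$ and $v$. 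One then checks that every identity so obtained lies in the ideal generated by (\ref{qo1})--(\ref{qo11}); running the computation in reverse --- i.e.\ verifying that a ${\mathcal K}(u)$ assembled from currents satisfying (\ref{qo1})--(\ref{qo11}) does solve the reflection equation --- gives the opposite inclusion, and hence the isomorphism.

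The main obstacle is this second step, and it is twofold. First, although it is ``merely'' an expansion of the reflection equation, it is genuinely delicate to organise the bookkeeping so that the \emph{symmetrised} relations (\ref{qo5})--(\ref{qo9}) and (\ref{qo11}) --- which are invisible at any single order in $u,v$ and materialise only after coefficients at several orders are combined --- come out in precisely the stated form, and so that the scalar dressing of ${\mathcal K}(u)$ is fixed in the unique way reproducing the prefactor $1/(q+q^{-1})$ in (\ref{qo1}) and the constant $\rho$ in (\ref{qo2})--(\ref{qo3}) with no spurious terms. Second, one must verify that (\ref{qo1})--(\ref{qo11}) is the \emph{complete} list of independent coefficients of the reflection equation: nothing is missing (so the reflection-equation ideal is contained in theirs) and nothing extra is imposed (so theirs is contained in the reflection-equation ideal). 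It is this completeness --- which is what upgrades ``one ideal contains the other'' to ``the ideals are equal'' --- that forces one to use the detailed structure of the fixed $U_q(\widehat{sl_2})$ $R$-matrix.
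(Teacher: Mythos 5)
The paper does not actually prove this theorem: it is stated with the attribution ``see \cite{BSh1}'' and no argument is given in the present text. The only information the paper supplies about its provenance is that the relations (\ref{qo1})--(\ref{qo11}) ``follow from the spectral parameter dependent reflection equation algebra associated with the $U_q(\widehat{sl_2})$ $R$-matrix,'' and that they were first conjectured in \cite{BK}. So the route you propose --- expand the reflection equation $R_{12}{\mathcal K}_1R_{21}{\mathcal K}_2={\mathcal K}_2R_{12}{\mathcal K}_1R_{21}$ mode by mode and match the resulting ideal with the one generated by (\ref{qo1})--(\ref{qo11}) --- is the correct architecture and coincides with the one the cited source follows.

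That said, as a proof your text is a roadmap rather than an argument: every substantive step is deferred. The explicit change of generators between the modes of the entries of ${\mathcal K}(u)$ and the four families $\{{\cW}_{-k}\},\{{\cW}_{k+1}\},\{{\cG}_{k+1}\},\{\tilde{\cG}_{k+1}\}$ is asserted to exist but never written down; the assignment of which matrix entries of the reflection equation produce which relations is a guess at the bookkeeping (``should reproduce''), not a computation; and the completeness claim --- that the reflection-equation ideal equals the ideal generated by (\ref{qo1})--(\ref{qo11}) in \emph{both} directions --- is precisely the hard content of the theorem and is left entirely open. You diagnose these obstacles accurately, but diagnosing them is not the same as overcoming them: nothing in the write-up would let a reader verify the theorem without redoing the entire computation of \cite{BSh1}. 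Since the paper itself outsources the proof, the fair summary is that your plan matches the cited proof's strategy in outline, while the proof itself is still missing.
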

As will be shown in the Appendix, all higher generators (for $k\geq 1$) can be defined recursively using the set of fundamental commuting relations (\ref{qo1})-(\ref{qo3}). Note that these latter relations together with (\ref{qo4}) at $l=0$ can be actually derived  using the existence of a unique intertwiner of the $q-$Onsager algebra ${\mathbb T}$ (see \cite{BSh1} for details). In the following, we will consider these relations in details. 
\begin{rem} The relations (\ref{qo1})-(\ref{qo3}) may be compared to the ones used to build recursively all generators of the Onsager algebra solely in terms of the fundamental ones $A_0,A_1$. See e.g. \cite{DateRoan}.
\end{rem}
\begin{rem}
Using (\ref{qo1})-(\ref{qo11}), additional commuting relations that will be used in the Apppendix (eqs. (\ref{h1}-\ref{h3})) can be derived. See \cite{BSh1} for details.  
\end{rem}

A natural ordering for the generators of $O_{q}(\widehat{sl_2})$ arises from the study of the commutation relations above. Assume ${\cG}_1,{\tilde{\cG}}_{1}$ are polynomials in the elements ${\cW}_0,{\cW}_1$. Taking $k=0$ in (\ref{qo1}) possible definitions of ${\cG}_1,{\tilde{\cG}}_{1}$ are such that $\mathrm{d}[{\cG}_1]=\mathrm{d}[{\tilde{\cG}}_{1}]\leq 2$, where $\mathrm{d}$ denotes the degree of each monomial in the elements ${\cW}_0,{\cW}_1$. By induction, from (\ref{qo2}), (\ref{qo3}) with (\ref{qo1}) it follows:
\begin{cor} The generators of $O_{q}(\widehat{sl_2})$ are polynomials in ${\cW}_0,{\cW}_1$ of degree:
\beqa
\qquad \qquad \mathrm{d}[{\cW}_{-k}]=\mathrm{d}[{{\cW}}_{k+1}]\leq 2k+1 \qquad \mbox{and} \qquad \mathrm{d}[{\cG}_{k+1}]=\mathrm{d}[{\tilde{\cG}}_{k+1}]\leq 2k+2 \ , \qquad k\in{\mathbb Z}_+. \label{order}
\eeqa
\end{cor}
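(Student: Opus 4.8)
The plan is to prove the degree bounds in \er{order} by a simultaneous induction on $k$, using only the three fundamental relations \er{qo1}--\er{qo3}. First I would set up the base case. For $k=0$, relation \er{qo1} at $k=0$ gives $\tilde{\cG}_1-\cG_1=(q+q^{-1})\big[\cW_0,\cW_1\big]$, so $\mathrm{d}[\tilde{\cG}_1-\cG_1]\le 2$; the hypothesis preceding the corollary stipulates that $\cG_1,\tilde{\cG}_1$ are themselves polynomials in $\cW_0,\cW_1$, and the minimal/natural choice compatible with \er{qo1} forces $\mathrm{d}[\cG_1]=\mathrm{d}[\tilde{\cG}_1]\le 2$. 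This settles the $k=0$ line $\mathrm{d}[\cG_{k+1}]=\mathrm{d}[\tilde{\cG}_{k+1}]\le 2k+2$ and, trivially, $\mathrm{d}[\cW_0]=\mathrm{d}[\cW_1]=1\le 2\cdot 0+1$.

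Next I would run the inductive step. Assume the bounds \er{order} hold for all indices up to some $k$; I want them for $k+1$. From \er{qo2} at level $k$, $\rho\,\cW_{-k-1}=\big[\cW_0,\cG_{k+1}\big]_q+\rho\,\cW_{k+1}$, and since $\mathrm{d}[\cW_0]=1$ and $\mathrm{d}[\cG_{k+1}]\le 2k+2$ by hypothesis, the $q$-commutator has degree $\le 2k+3=2(k+1)+1$; the term $\cW_{k+1}$ has degree $\le 2k+1\le 2(k+1)+1$, so $\mathrm{d}[\cW_{-k-1}]\le 2(k+1)+1$. By the analogous reading of \er{qo3} at level $k$, namely $\rho\,\cW_{k+2}=\big[\cG_{k+1},\cW_1\big]_q+\rho\,\cW_{-k}$, one gets $\mathrm{d}[\cW_{k+2}]\le 2(k+1)+1$. (One should note the parallel equations involving $\tilde{\cG}$ give nothing new here; consistency of the two expressions for $\cW_{-k-1}$ and $\cW_{k+2}$ is guaranteed by the algebra being well-defined.) Finally, feeding these new bounds $\mathrm{d}[\cW_{-k-1}]\le 2k+3$ and $\mathrm{d}[\cW_1]=1$ into \er{qo1} at level $k+1$, i.e. $\tilde{\cG}_{k+2}-\cG_{k+2}=(q+q^{-1})\big[\cW_{-k-1},\cW_1\big]$, yields $\mathrm{d}[\tilde{\cG}_{k+2}-\cG_{k+2}]\le 2k+4=2(k+1)+2$; combined with the defining recursion that expresses $\cG_{k+2},\tilde{\cG}_{k+2}$ as the minimal-degree polynomial solution (as will be made explicit in the Appendix), this gives $\mathrm{d}[\cG_{k+2}]=\mathrm{d}[\tilde{\cG}_{k+2}]\le 2(k+1)+2$, closing the induction.

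The main obstacle I anticipate is not the degree counting itself — that is the routine arithmetic $2k+1\mapsto 2k+3$, $2k+2\mapsto 2k+4$ — but rather justifying that the relations \er{qo1}--\er{qo3} actually \emph{determine} the higher generators $\cW_{-k-1},\cW_{k+2},\cG_{k+2},\tilde{\cG}_{k+2}$ (up to the claimed degree) rather than merely constraining their top-degree parts. One must argue that, given the lower generators as fixed polynomials in $\cW_0,\cW_1$, the right-hand sides of \er{qo2}, \er{qo3} at level $k$ really can be solved for $\cW_{-k-1}$ and $\cW_{k+2}$ as polynomials (this is where the hypothesis $q$ not a root of unity, entering through $q+q^{-1}\neq 0$ and through invertibility in the $q$-commutator structure, is used), and that the two a priori different expressions one obtains for each of them coincide — this compatibility is exactly what the isomorphism with $\mathbb T$ in Theorem stated above, together with the Appendix's uniqueness result, provides. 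Once that well-definedness is granted, the degree estimate propagates mechanically, so I would present the degree induction as the body of the proof and relegate the well-definedness to an appeal to \cite{BSh1} and the forthcoming Appendix.
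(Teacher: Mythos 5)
Your proof is correct and follows essentially the same route as the paper: the base case from \er{qo1} at $k=0$ (under the standing assumption that $\cG_1,\tilde{\cG}_1$ are polynomials in $\cW_0,\cW_1$), then the induction $2k+2\mapsto 2k+3$ via \er{qo2}, \er{qo3} and $2k+3\mapsto 2k+4$ via \er{qo1}, with well-definedness of the higher generators deferred to the Appendix and \cite{BSh1}. The paper merely sketches this ("By induction, from \er{qo2}, \er{qo3} with \er{qo1} it follows"), so your write-up is a faithful expansion of the intended argument.
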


According to the relations (\ref{qo1})-(\ref{qo3}), given ${\cW}_{-k},{\cW}_{k+1},{{\cG}}_{k+1},{\tilde{\cG}}_{k+1}$ for $k$ fixed, higher elements ${{\cW}}_{-k-1},{{\cW}}_{k+2}$ are uniquely determined. Deriving explicitly ${{\cG}}_{k+2},{\tilde{\cG}}_{k+2}$ in terms of lowest elements is however not so direct in view of (\ref{qo1}). Consider the simplest examples for $k=0,1,2$:
\begin{example} 
The elements ${{\cG}}_{1},{{\cG}}_{2},{{\cG}}_{3}$ can be written as
\beqa
\cG_1&=&[\cW_1,\cW_0]_q+a_1\ ,\nonumber\\
\cG_2&=&\frac{1}{(q^2+q^{-2})}\left(q\, [\cW_2,\cW_0]_{q^2}+q^{-1}\,  [\cW_1,\cW_{-1}]_{q^2}-(q-q^{-1})\,\Big(q^{-2} \,(\cW_1)^2+q^2 \,(\cW_0)^2+\frac{\cG_1\,\tilde{\cG}_1}{\rho}\Big)\right)+a_2\ ,\nonumber\\
\cG_3&=&\frac{1}{(q^2+q^{-2}-1)}\left((q-q^{-1})\,\big(q^2\,\cW_3\,\cW_0+q^{-2}\,\,\cW_{-2}\,\cW_1\big)+[\cW_2,\cW_{-1}]_q\right.\nonumber\\
&&\qquad \qquad \qquad \qquad \left. -(q-q^{-1})\,\Big(q^{-2}\,\cW_1 \,\cW_2+q^2 \,\cW_0\,\cW_{-1}+\frac{\cG_2\,\tilde{\cG}_1}{\rho}\Big)\right) +a_3\ ,\nonumber
\eeqa
where $a_1,a_2,a_3\in {\mathbb C}$ are arbitrary. Expressions of ${\tilde{\cG}}_{1},{\tilde{\cG}}_{2},{\tilde{\cG}}_{3}$ are obtained exchanging ${\cW}_{-k}\leftrightarrow{\cW}_{k+1}$ and ${\cG}_{k+1}\leftrightarrow{\tilde{\cG}}_{k+1}$ in above expressions.
\end{example}
\begin{proof}
For $k=0,1,2$, we have to show that writting ${\cG}_{k+1}$ (and similarly ${\tilde\cG}_{k+1}$) as the most general quadratic combination of lowest elements such that $\mathrm{d}[{\cG}_{k+1}]\leq 2k+2$, up to the relations (\ref{qo1})-(\ref{qo11}) the coefficients in the combination are uniquely determined.  By inspection of the structure and degree's dependency in the relations (\ref{qo1})-(\ref{qo3}) and (\ref{qo4}) for $l=0$, only quadratic combinations of lowest elements of even degree can be actually considered. The derivation of the case $k=0$ is obvious. For the case $k=1$, the most general possible combination includes lowest terms of the form $\cW_0\cW_2$, $\cW_2\cW_0$, $\cW_1\cW_{-1}$, $\cW_{-1}\cW_{1}$, $\cW_0^2$, $\cW_1^2$,  $\cG_1\,\tilde{\cG}_1$ and $\tilde{\cG}_1\cG_1$. Noticing that\footnote{Generalizations of this equation are reported in the Appendix, see eqs. (\ref{h1}-\ref{h3}).} 
\beqa
-{\cW}_{0}^2 + {\cW}_{1}^2 - {\cW}_{-1}{\cW}_{1} + {\cW}_{0}{\cW}_{2} - \frac{1}{\rho(q^2-q^{-2})}\big[{\cG}_{1},{\tilde{\cG}}_{1}\big]=0\ ,
\eeqa
the equations (\ref{qo1})-(\ref{qo3}) are reduced to a set of irreducible equations, leading to linear constraints determining uniquely the coefficients.
The derivation of the case $k=2$ is similar. In each case, a straightforward calculation shows that other relations (\ref{qo4}) for $l\neq 0$ and (\ref{qo5})-(\ref{qo11}) are automatically satisfied: no additional constraints on the coefficients appear. Although not reported here, up to $k=5$ a similar structure for the elements ${{\cG}}_{k+1},{\tilde{\cG}}_{k+1}$ is observed.
\end{proof}

Inspired by the examples above, higher generators ${\cG}_{k+1}$ (and similarly for ${\tilde\cG}_{k+1}$) can be derived explicitly in terms of lower ones along the same lines. The proof of the following result is reported in the Appendix.

\begin{prop}{\label{prop1}} For $i,j\in {\mathbb Z}_+$, define $\bar{k}=1$ (resp. $0$) for $k$ even (resp. odd)  and $\alpha=\null[\frac{k}{2}]=\null \frac{k}{2}$ (resp. $\frac{k-1}{2}$)  for   $k$ even (resp. odd).  Given $k$, the highest elements can be written 
\beqa
\cG_{k+1}&=&\sum_{l=0}^{\alpha}\sum_{i+j=2l+1-\bar{k}} A_{ij} + \sum_{l=0}^{\alpha-\bar{k}}\sum_{i+j=2l+\bar{k}, i\leq j} F_{ij} \ + \ a_k \ ,\label{ansatz}\\
A_{ij}&=&a^{(k)}_{ij} \cW_{-i}\cW_{j+1}+b^{(k)}_{ij} \cW_{i+1} \cW_{-j} \ , \nonumber \\
F_{ij}&=&e^{(k)}_{ij}\big(q^2 \cW_{-i}\cW_{-j}+q^{-2}\cW_{i+1}\cW_{j+1}+\frac{1}{\rho} \cG_{j+1}\tilde{\cG}_{i+1}\big)\nonumber\ .
\eeqa
The coefficients $a^{(k)}_{ij} ,b^{(k)}_{ij}$ and $e^{(k)}_{ij}$ $\in {\mathbb{C}}$ are uniquely determined up to (\ref{qo1}-\ref{qo11}), and $a_k\in {\mathbb{C}}$ is an arbitrary parameter. The explicit expression of ${\tilde{\cG}}_{k+1}$ is obtained exchanging ${\cW}_{-k}\leftrightarrow{\cW}_{k+1}$ and ${\cG}_{k+1}\leftrightarrow{\tilde{\cG}}_{k+1}$ in the expression above.
\end{prop}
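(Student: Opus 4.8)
The plan is to generalize the inductive argument already sketched for the cases $k=0,1,2$ in the Example to arbitrary $k$, proceeding by induction on $k$. Assume that ${\cW}_{-j},{\cW}_{j+1},{\cG}_{j+1},{\tilde\cG}_{j+1}$ have been constructed for all $j\le k$ as polynomials in ${\cW}_0,{\cW}_1$ satisfying (\ref{qo1})--(\ref{qo11}) among themselves, with the degree bounds (\ref{order}). The relations (\ref{qo2})--(\ref{qo3}) with index $k$ then \emph{define} ${\cW}_{-k-1}$ and ${\cW}_{k+2}$ uniquely, since $\big[{\cW}_0,-\big]_q$ on a suitable space is determined by the $q$-Onsager structure; one checks the two expressions for ${\cW}_{-k-1}$ (from $\big[{\cW}_0,{\cG}_{k+1}\big]_q$ and from $\big[{\tilde\cG}_{k+1},{\cW}_0\big]_q$) agree, and likewise for ${\cW}_{k+2}$, and that the degree bound $\mathrm{d}\le 2k+3$ holds. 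The substantive step is ${\cG}_{k+2},{\tilde\cG}_{k+2}$: write down the ansatz (\ref{ansatz}) with $k\to k+1$, i.e.\ the most general linear combination of the monomials ${\cW}_{-i}{\cW}_{j+1}$, ${\cW}_{i+1}{\cW}_{-j}$, ${\cW}_{-i}{\cW}_{-j}$, ${\cW}_{i+1}{\cW}_{j+1}$ and ${\cG}_{j+1}{\tilde\cG}_{i+1}$ of the correct total index, with undetermined coefficients $a^{(k+1)}_{ij},b^{(k+1)}_{ij},e^{(k+1)}_{ij}$ and a free constant $a_{k+1}$.

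Next I would impose the defining relations (\ref{qo1})--(\ref{qo3}) at the appropriate indices. Concretely, (\ref{qo1}) at $k\to k+1$ reads $\big[{\cW}_0,{\cW}_{k+2}\big]=\frac{1}{q+q^{-1}}({\tilde\cG}_{k+2}-{\cG}_{k+2})$, and since ${\cW}_{k+2}$ is already fixed, the left side is a known polynomial in ${\cW}_0,{\cW}_1$; similarly (\ref{qo2})--(\ref{qo3}) give $\big[{\cW}_0,{\cG}_{k+2}\big]_q$, $\big[{\cG}_{k+2},{\cW}_1\big]_q$ etc.\ in terms of the already-constructed ${\cW}_{-k-2},{\cW}_{k+2}$. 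Expanding each commutator using the inductively known relations turns these into linear equations in the unknown coefficients. The key simplifying device, exactly as in the Example where the identity $-{\cW}_0^2+{\cW}_1^2-{\cW}_{-1}{\cW}_1+{\cW}_0{\cW}_2-\frac{1}{\rho(q^2-q^{-2})}\big[{\cG}_1,{\tilde\cG}_1\big]=0$ is used, is the family of ``homogeneity'' identities (\ref{h1})--(\ref{h3}) promised in the Appendix: these express certain quadratic combinations of the ${\cW}$'s and ${\cG}{\tilde\cG}$'s as zero (or as each other), and they are precisely what collapses the a priori overdetermined linear system into an \emph{irreducible} one with a unique solution. I would first establish these identities (by induction, from (\ref{qo1})--(\ref{qo11})), then feed them into the coefficient-matching.

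Having solved for $a^{(k+1)}_{ij},b^{(k+1)}_{ij},e^{(k+1)}_{ij}$ from (\ref{qo1})--(\ref{qo3}), the remaining task is consistency: verify that the newly defined ${\cG}_{k+2},{\tilde\cG}_{k+2},{\cW}_{-k-2},{\cW}_{k+2}$, together with all previously defined generators, satisfy the full list (\ref{qo4})--(\ref{qo11}) without any further constraint on the coefficients. This is the ``straightforward calculation'' alluded to in the Example's proof, and carrying it out uniformly in $k$ (rather than case by case) is where I expect the real work and the main obstacle to lie: one must show that the defining subset (\ref{qo1})--(\ref{qo3}) plus (\ref{qo4}) at $l=0$ already forces all of (\ref{qo4})--(\ref{qo11}), using the Jacobi identity and the $q$-Onsager/intertwiner structure referenced from \cite{BSh1}. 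The symmetry statement for ${\tilde\cG}_{k+1}$ is then immediate: the whole system (\ref{qo1})--(\ref{qo11}) is invariant under the involution ${\cW}_{-k}\leftrightarrow{\cW}_{k+1}$, ${\cG}_{k+1}\leftrightarrow{\tilde\cG}_{k+1}$ (compare the paired relations in each line), so applying this involution to the constructed ${\cG}_{k+1}$ yields an element satisfying all the relations ${\tilde\cG}_{k+1}$ must satisfy, and uniqueness forces it to be ${\tilde\cG}_{k+1}$. Finally, the degree bounds $\mathrm{d}[{\cG}_{k+1}]\le 2k+2$ follow by tracking degrees through the recursion, matching Corollary~(\ref{order}), which in turn guarantees that the ansatz (\ref{ansatz}) really is the most general admissible quadratic expression, so that ``uniquely determined up to (\ref{qo1}--\ref{qo11})'' is justified. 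The induction then closes, and the resulting assignment extends to an algebra homomorphism $O_q(\widehat{sl_2})\to{\mathbb T}$ as claimed in the introduction.
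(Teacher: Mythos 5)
Your plan is essentially the paper's own proof: the Appendix likewise takes the most general even-degree quadratic ansatz bounded by (\ref{order}), imposes (\ref{qo1})--(\ref{qo3}), uses the auxiliary identities (\ref{h1})--(\ref{h3}) to collapse the constraints into a linear system with exactly as many independent equations as unknown coefficients (hence uniqueness), obtains ${\tilde{\cG}}_{k+1}$ by the same involution, and disposes of (\ref{qo4})--(\ref{qo11}) by appealing to the isomorphism with the reflection equation algebra from \cite{BSh1}, for which (\ref{qo1})--(\ref{qo3}) and (\ref{qo4}) at $l=0$ suffice. The only difference is presentational: the paper carries out the coefficient-matching uniformly in $k$ via the bookkeeping operators $\cB_{ij},\cC_{ij},\cD_{ij},\cF_{ij}$ rather than as an explicit induction step, but the substance is the same.
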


Thanks to the recursive formula (\ref{ansatz}), we are now in position to exhibit an explicit homomorphism relating the current algebra $O_{q}(\widehat{sl_2})$ and the $q-$Onsager algebra $\mathbb{T}$. Recall that the $q-$Onsager algebra $\mathbb{T}$ is a special case of tridiagonal algebras that have been introduced and studied in
\cite{Ter93,Ter01,Ter03}, where they first appeared in the context
of $P-$ and $Q-$polynomial association schemes. A tridiagonal algebra is an associative algebra with unit which consists of two generators ${\textsf A}$ and ${\textsf A}^*$ called the standard generators. In general, the defining relations depend on five scalars $\rho,\rho^*,\gamma,\gamma^*$ and $\beta$. Below, we focus on the {\it reduced} parameter sequence $\gamma=0,\gamma^*=0$, $\beta=q^2+q^{-2}$ and $\rho=\rho^*$  which exhibits all interesting properties that can be extended to more general parameter sequences.
We call the corresponding algebra the $q-$Onsager algebra denoted ${\mathbb T}$, in view of its close relationship with the Onsager algebra \cite{Onsager} and the Dolan-Grady relations \cite{DG}. Note that the isomorphism between the Onsager and Dolan-Grady  algebraic structures has been studied in \cite{Pe,APMPT,Da} and shown explicitly in \cite{DateRoan}.
\begin{defn}[see also \cite{Ter03}]
The $q-$Onsager algebra $\mathbb{T}$ is the associative algebra with unit and standard generators $\textsf{A},\textsf{A}^*$ subject to the following relations
\beqa
[\textsf{A},[\textsf{A},[\textsf{A},\textsf{A}^*]_q]_{q^{-1}}]=\rho[\textsf{A},\textsf{A}^*]\
,\qquad
[\textsf{A}^*,[\textsf{A}^*,[\textsf{A}^*,\textsf{A}]_q]_{q^{-1}}]=\rho[\textsf{A}^*,\textsf{A}]\
\label{qDG} . \eeqa
\end{defn}
\begin{rem} For $\rho=0$ the relations (\ref{qDG})
reduce to the $q-$Serre relations of $U_{q}(\widehat{sl_2})$. For $q=1$, $\rho=16$ they coincide with the Dolan-Grady relations \cite{DG}.
\end{rem}
\begin{prop}{\label{prop2}} The recursive relations (\ref{qo2}), (\ref{qo3}) together with Proposition 2.1 induce an explicit homomorphism from $O_{q}(\widehat{sl_2})$ to $\mathbb{T}$. For instance, 
\beqa
{\cW}_{0}\rightarrow \textsf{A}\ ,\qquad {\cW}_{1}\rightarrow \textsf{A}^* \label{homo}\ .
\eeqa
\end{prop}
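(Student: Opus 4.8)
The plan is to exhibit the assignment $\cW_0\mapsto\textsf{A}$, $\cW_1\mapsto\textsf{A}^*$ and show it extends to a well-defined algebra homomorphism $O_q(\widehat{sl_2})\to\mathbb{T}$. Since Proposition~\ref{prop1} (together with the $k=0$ case $\cG_1=[\cW_1,\cW_0]_q+a_1$ and its tilde-counterpart, and the relations (\ref{qo2})--(\ref{qo3}) which fix $\cW_{-k-1},\cW_{k+2}$ from lower data) gives \emph{every} generator $\{\cW_{-k},\cW_{k+1},\cG_{k+1},\tilde\cG_{k+1}\}$ as an explicit polynomial in $\cW_0,\cW_1$, the map on generators is forced: one simply substitutes $\cW_0=\textsf{A}$, $\cW_1=\textsf{A}^*$ into those polynomials (choosing, say, all $a_k=0$) to obtain elements $\hat\cW_{-k},\hat\cW_{k+1},\hat\cG_{k+1},\hat{\tilde\cG}_{k+1}\in\mathbb{T}$. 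The content of the proposition is then that these images satisfy all of (\ref{qo1})--(\ref{qo11}) as identities in $\mathbb{T}$, i.e.\ as consequences of the two $q$-Dolan--Grady relations (\ref{qDG}).

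First I would treat the base/low-degree cases by hand: with $\hat\cW_0=\textsf{A}$, $\hat\cW_1=\textsf{A}^*$, $\hat\cG_1=[\textsf{A}^*,\textsf{A}]_q$, $\hat{\tilde\cG}_1=[\textsf{A},\textsf{A}^*]_q$, one checks directly that (\ref{qo1})--(\ref{qo3}) at $k=0$ plus (\ref{qo4}) at $l=0$ hold; the two cubic relations (\ref{qDG}) are precisely what is needed here, since e.g.\ $[\hat\cW_0,\hat\cG_1]_q=[\textsf{A},[\textsf{A}^*,\textsf{A}]_q]_q$ should equal $\rho(\hat\cW_{-1}-\hat\cW_1)$, and this is where $\hat\cW_{-1}$ gets its definition consistent with $\mathbb{T}$. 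Then, since the higher generators are \emph{defined} through (\ref{qo2})--(\ref{qo3}) and Proposition~\ref{prop1}, the relations (\ref{qo2}), (\ref{qo3}) and (\ref{ansatz}) hold for the images by construction. The remaining task is to verify the relations that were \emph{not} used as defining relations — (\ref{qo1}) for $k\geq 1$, (\ref{qo4}) for $l\neq 0$, and (\ref{qo5})--(\ref{qo11}) — for the images; this is exactly the ``automatically satisfied / no additional constraints'' phenomenon already observed in the Example and asserted in Proposition~\ref{prop1}, so I would invoke that computation and package it as an induction on $k$.

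The cleanest way to organize the induction is to use the auxiliary identities (\ref{h1})--(\ref{h3}) (the higher analogues of the displayed quadratic identity $-\cW_0^2+\cW_1^2-\cW_{-1}\cW_1+\cW_0\cW_2-\tfrac{1}{\rho(q^2-q^{-2})}[\cG_1,\tilde\cG_1]=0$), which hold in $O_q(\widehat{sl_2})$ by the previous Remark and which, once established for the images in $\mathbb{T}$, let one close the recursion: assuming all of (\ref{qo1})--(\ref{qo11}) and (\ref{h1})--(\ref{h3}) hold for indices $\leq k$, one uses the defining recursions to introduce level $k+1$ and then a finite $q$-commutator manipulation (Jacobi-type identities for $[\,\cdot\,,\,\cdot\,]_q$, plus the already-verified lower relations) shows they persist at level $k+1$. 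The main obstacle is precisely this bookkeeping: verifying that the quadratic Ansatz (\ref{ansatz}) is \emph{consistent} — that the coefficients $a^{(k)}_{ij},b^{(k)}_{ij},e^{(k)}_{ij}$ forced by (\ref{qo2})--(\ref{qo3}) automatically make (\ref{qo1}) and the commutativity-type relations (\ref{qo4})--(\ref{qo11}) hold — since a priori those over-determine the system. This is the computational heart of the Appendix proof of Proposition~\ref{prop1}; granting it, Proposition~\ref{prop2} is essentially a corollary, because a homomorphism out of an algebra presented by generators and relations is well defined as soon as the images of the generators satisfy the defining relations, and that is what the induction establishes. Finally I would remark that the image subalgebra is all of $\mathbb{T}$ (it contains $\textsf{A},\textsf{A}^*$), so the homomorphism is surjective, and that the freedom in the parameters $a_k$ (and the obvious symmetry $\cW_{-k}\leftrightarrow\cW_{k+1}$, $\cG_{k+1}\leftrightarrow\tilde\cG_{k+1}$) accounts for the words ``for instance'' in the statement.
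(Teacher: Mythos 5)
Your overall skeleton (a map out of an algebra presented by generators and relations is well defined once the images of the generators satisfy the defining relations; define the images of the higher generators through \er{qo2}, \er{qo3} and \er{ansatz}; then check \er{qo1}--\er{qo11} for the images) is the textbook way to build such a homomorphism, but it is not the route the paper takes, and as written it has a gap at its computational heart. You defer the verification that the images satisfy all of \er{qo1}--\er{qo11} \emph{in} $\mathbb{T}$, i.e.\ as consequences of the two cubic relations \er{qDG} alone, to ``the computational heart of the Appendix proof of Proposition~\ref{prop1}.'' That Appendix does not do this job: it works \emph{inside} $O_q(\widehat{sl_2})$, where \er{qo1}--\er{qo11} hold by definition, and its content is the uniqueness of the coefficients $a^{(k)}_{ij},b^{(k)}_{ij},e^{(k)}_{ij}$ in the quadratic Ansatz, not the validity of the relations in a quotient-like target. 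Moreover, even there the relations \er{qo6}--\er{qo11} are disposed of by invoking the isomorphism with the reflection equation algebra from \cite{BSh1}, not by direct computation. So the inductive closure you need (``a finite $q$-commutator manipulation shows they persist at level $k+1$'') is asserted but supplied nowhere; carrying it out from \er{qDG} alone would amount to reproving Theorem~3 of \cite{BSh1}.

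The paper's actual proof is much shorter and sidesteps all of this: it cites the isomorphism $O_q(\widehat{sl_2})\cong\mathbb{T}$ established in \cite{BSh1}, under which every element of $O_q(\widehat{sl_2})$ is already known to be realizable as a noncommutative polynomial in $\textsf{A},\textsf{A}^*$; well-definedness is therefore automatic, and the only content of Proposition~\ref{prop2} is that the recursion \er{qo2}, \er{qo3} together with \er{ansatz} makes this realization \emph{explicit}, starting from the choice \er{homo} (fixed, up to $\textsf{A}\leftrightarrow\textsf{A}^*$, by \er{qo4} at $l=0$, $k=1$). If you want a self-contained argument along your lines, you must either import \cite{BSh1} at the point where you claim the relations ``persist at level $k+1$,'' or genuinely perform the induction in $\mathbb{T}$ — your low-degree checks at $k=0$ are correct and are indeed where \er{qDG} first enters, but they are the easy part. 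Your closing observations (surjectivity, the role of the parameters $a_k$ and of the exchange symmetry) are fine but peripheral.
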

\begin{proof} According to the isomorphism [see \cite{BSh1}, Theorem 3] between $O_{q}(\widehat{sl_2})$ and $\mathbb{T}$, any element of $O_{q}(\widehat{sl_2})$ can be realized as a non-linear combination of the standard generators $\textsf{A},\textsf{A}^*$.  According to (\ref{qo4}) for $l=0,k=1$, we choose for instance (\ref{homo}) - another obvious possiblity being $\textsf{A}\leftrightarrow \textsf{A}^*$. Replacing (\ref{homo}) in (\ref{qo2}), (\ref{qo3}) and applying (\ref{ansatz}) for $k=0,1,2,...$ an explicit realization of higher elements in terms of $\textsf{A},\textsf{A}^*$ follow.    
\end{proof}
\begin{example} 
The next elements read:
\beqa
&&{\cG}_{1} \rightarrow \big[{{\textsf A}^*},{\textsf A}\big]_q + a_1 \ ,\label{defel}\\
&&{\cW}_{-1}\rightarrow\frac{1}{\rho}\left( (q^2+q^{-2})\textsf{A}\textsf{A}^*\textsf{A} -\textsf{A}^2\textsf{A}^* - \textsf{A}^* \textsf{A}^2\right) + \textsf{A}^* + \frac{a_1(q-q^{-1})}{\rho}\textsf{A}\ ,\nonumber\\
&&{\cG}_{2}\rightarrow \frac{1}{\rho(q^2+q^{-2})} \left( (q^{-3}+q^{-1}) {\textsf A}^2{{\textsf A}^*}^2 - (q^{3}+q){{\textsf A}^*}^2{\textsf A}^2 + (q^{-3}-q^{3})({\textsf A}{{\textsf A}^*}^2{\textsf A} + {{\textsf A}^*}{\textsf A}^2{{\textsf A}^*}) \right. \nonumber\\
&&\qquad \qquad \left. - (q^{-5}+q^{-3} +2q^{-1}) {\textsf A}{{\textsf A}^*}{\textsf A}{{\textsf A}^*} + (q^{5}+q^{3} +2q){{\textsf A}^*}{\textsf A}{{\textsf A}^*}{\textsf A} +  \rho(q-q^{-1})({\textsf A}^2 + {{\textsf A}^*}^2
)\right. \nonumber\\
&&\qquad \qquad \left.+a_1(q^2+q^{-2})(q-q^{-1})\big[{{\textsf A}^*},{\textsf A}\big]_q \right) + a_2 \ ,\nonumber 
\eeqa
where $a_1,a_2\in {\mathbb C}$ are arbitrary. Expressions of ${\cW}_{k+1}$ and ${\tilde{\cG}}_{k+1}$ are obtained from ${\cW}_{-k}$ and ${{\cG}}_{k+1}$ exchanging ${\textsf A}\leftrightarrow{\textsf A}^*$ .
\end{example}
Note that the expressions above agree with the ones proposed in \cite{BK}. However, contrary to \cite{BK} the expressions here are not conjectured using the properties of certain finite dimensional tensor product representations of (\ref{qDG}). 

\section{Applications}
In \cite{DG}, Dolan and Grady studied the existence of an infinite family of mutually commuting quantities in quantum integrable systems with an underlying spectrum generating algebra of the form (\ref{qDG}) at $q=1$. A one-parameter family of mutually commuting elements was constructed recursively, which first element coincides with the Hamiltonian of the integrable system under consideration - for instance the planar Ising or superintegrable chiral Potts model.
For $q\neq 1$, the existence of a $q-$deformed analog of the Dolan-Grady hierarchy was actually conjectured in \cite{Bas2}, but writting explicitly higher mutually commuting elements in terms of ${\textsf A},{\textsf A}^*$ remained technically problematic. To circumvent this difficulty, the transfer matrix formalism was used: starting from the solutions of the reflection equation (\cite{B1,BK}), combinations of the form\footnote{The transfer matrix of the XXZ open spin chain with non-diagonal boundary conditions is a linear combination of such quantities \cite{BK1}.} 
\beqa
{\cal I}_{2k+1}=\kappa {\cal W}_{-k} + \kappa^* {\cal W}_{k+1} + \kappa_+ {\cal G}_{k+1} 
+ \kappa_- {\tilde{\cal G}}_{k+1}\ \label{Icons} 
\eeqa
were found to be mutually commuting for arbitrary parameters $\kappa,\kappa^*,\kappa_\pm\in {\mathbb C}$. Based on the analysis of certain finite dimensional tensor product representations of (\ref{qo1})-(\ref{qo11}), the hierarchy (\ref{Icons}) was then conjectured to be in one-to-one correspondence with a $q-$deformed analog of the Dolan-Grady hierarchy. For these representations, explicit calculations up to $k=3$ supported the conjecture \cite{BK}. But until the isomorphism between $O_q(\widehat{sl_2})$ and ${\mathbb T}$ was shown (see \cite{BSh1}), the exact relation between both hierarchies at the level of the algebra could not be investigated.\vspace{1mm}

Thanks to the results of \cite{BSh1} and the homomorphism here proposed, we are now in position to construct the $q-$Dolan Grady hierarchy - with no reference to a representation space on which ${\textsf A},{\textsf A}^*$ act, contrary to previous works \cite{BK}. Indeed, it is easy to check solely using (\ref{qo1})-(\ref{qo11}) that the quantities (\ref{Icons}) are mutually commuting.  Consider for instance the simplest $q-$deformed analog of the Dolan-Grady hierarchy, i.e. $\kappa_\pm=0$ and choose $a_1=a_2=0$ in (\ref{defel}). A simple calculation leads to the following hierarchy of mutually commuting quantities:
\beqa
{\cal I}_{2k+1}=\kappa f_k({\textsf A},{\textsf A}^*) + \kappa^* f_k({\textsf A}^*,{\textsf A}) \label{charge}
\eeqa
where the polynomials $f_k({\textsf A},{\textsf A}^*)$ are computed recursively using (\ref{qo2})-(\ref{qo3}) and Proposition 2.2. For instance,
\beqa
f_0({\textsf A},{\textsf A}^*)&=& {\textsf A}\ ; \qquad f_1({\textsf A},{\textsf A}^*)=\frac{1}{\rho}\left( (q^2+q^{-2})\textsf{A}\textsf{A}^*\textsf{A} -\textsf{A}^2\textsf{A}^* - \textsf{A}^* \textsf{A}^2\right) + \textsf{A}^* \ ,\nonumber\\
f_2({\textsf A},{\textsf A}^*)&=&\frac{1 }{\rho^2(q^2+q^{-2})}\left(  (q^{-2}+1)\textsf{A}^3{\textsf{A}^*}^2 + (q^{2}+1){\textsf{A}^*}^2{\textsf{A}}^3 -(q^{4}+q^{-4})(\textsf{A}{\textsf{A}^*}^2 \textsf{A}^2 + \textsf{A}^2{\textsf{A}^*}^2 \textsf{A})  \right.\nonumber\\
&&\qquad \qquad \qquad \left. + (q^{-2}-q^{4})\textsf{A}{\textsf{A}}^*\textsf{A}^2\textsf{A}^*+(q^{2}-q^{-4})\textsf{A}^*{\textsf{A}}^2\textsf{A}^*\textsf{A} \right.\nonumber\\
&&\qquad \qquad \qquad \left.- (q^{-4}+q^{-2}+2)\textsf{A}^2{\textsf{A}}^*\textsf{A}\textsf{A}^* - (q^{4}+q^{2}+2)\textsf{A}^*\textsf{A}\textsf{A}^*\textsf{A}^2 \right.\nonumber\\
&&\qquad \qquad \qquad \left. - (q^{6} + q^{4}+2q^{2}+2q^{-2}+q^{-4}+q^{-6})\textsf{A}\textsf{A}^*\textsf{A}\textsf{A}^*\textsf{A}\right. \nonumber\\
&&\qquad \qquad \qquad \left. + (q^2-1)(\textsf{A}^3+\textsf{A}{\textsf{A}^*}^2) + (q^{-2}-1)(\textsf{A}^3+\textsf{A}^*{\textsf{A}}^2)\right) + f_1({\textsf A}^*,{\textsf A})\nonumber \ .
\eeqa
Setting $\rho=16$ and $q=1$ in (\ref{qDG}) one recovers the Dolan-Grady relations, in which case the first few elements simply reduce to the ones proposed in \cite{DG}, as expected.\vspace{1mm} 

As suggested in \cite{Ter03} the representation theory of the $q-$Onsager algebra (\ref{qDG}) potentially provides a classification scheme for special symmetric fonctions, some of them being already well-known. In this context, the recursive formula (\ref{ansatz}) finds another straightforward application. Among the simplest examples, consider the elements of (\ref{qDG}) realized as $\textsf{A}\rightarrow A,\textsf{A}^*\rightarrow A^*$ where  $A,A^*$ satisfy the Askey-Wilson algebra \cite{Zhed} with defining relations:  
\beqa A^*A^2 + {A^2}A^* - (q^2+q^{-2})AA^*A  -
\rho A^* - \omega A&=& 0 \  ,\label{nonstand}\\
A{A^*}^2 + {A^*}^2A - (q^2+q^{-2})A^*AA^*  -
{\rho} A - \omega A^* &=& 0\ .\nonumber 
\ 
\eeqa
Remarkably, the
$q-$orthogonal Askey-Wilson polynomials with variable $x\equiv z+z^{-1}$ defined by\footnote{For arbitrary parameters $a,b,c,d$ such that none of the combinations $ab,ac,ad,bc,bd,cf,abcd$ are integer powers of $q$. Here, $_4\Phi_3$ denotes the basic $q-$hypergeometric function.}
\beqa
p_n(x;a,b,c,d)= _4\!\!\Phi_3 \left[ \begin{array}{c} q^{-n},\ abcdq^{n-1},\ az,\ az^{-1} \\
ab, \ ac, \ ad   
\end{array} ;q \ | \ q\right]
\eeqa
provide an infinite dimensional representation\footnote{Note that relations between the Askey-Wilson algebra and Cherednik's double affine Hecke algebra have been considered for instance in \cite{Koo}.} of (\ref{nonstand}) in which case the element $A$ (resp. $A^*$) acts as a second-order $q-$difference operator (resp. $z+z^{-1}$) in the variable $z$ (see e.g. \cite{Zhed,Gra,Gru,Nou,Ter03} ). Note that finite dimensional representations can also be obtained by restricting the variable $z$ to a discret support, in which case 
$A,A^*$ are identified to Leonard pairs \cite{Terleo}.\vspace{1mm}

According to the homomorphism (\ref{homo},\ref{defel}), it is clear that the Askey-Wilson relations (\ref{nonstand}) can be alternatively written as:
\beqa
{\cal W}_{-1} + \alpha {\cal W}_{0}=0\ \qquad \mbox{et} \qquad {\cal W}_{2} + \alpha {\cal W}_{1}=0\ \label{AW}
\eeqa
where $\alpha=(\omega-a_1(q-q^{-1}))/\rho$. It is then interesting to recall that
based on the analysis of finite dimensional tensor product representations of $O_q(\widehat{sl_2})$, generalizations of these relations were derived in \cite{BK} (see eqs. (54-57)). In particular, for the simplest generalization of (\ref{AW}) the elements ${\cal W}_{-2},{\cal W}_{-1},{\cal W}_{2},{\cal W}_{3}$ of $O_q(\widehat{sl_2})$ were realized as  polynomials in a tridiagonal pair\footnote{For a definition of tridiagonal pairs and the Askey-Wilson relations, see for instance \cite{IT}.}  associated with ${\textsf A},{\textsf A}^*$. Using the relation between the reflection equation algebra and the algebra $O_q(\widehat{sl_2})$ (see for instance \cite{Bas2,BK}), generalizations of (\ref{AW}) are clearly expected for arbitrary finite dimensional representations. By analogy with the Askey-Wilson algebra, it is thus natural to consider the {\it $N-$th order higher Askey-Wilson algebra} which defining relations can be written in the simple form:
\beqa
\sum^{N}_{k=0}\alpha_{k}^{(N)}{\cal W}_{-k}=0\ \qquad \mbox{and} \qquad \sum^{N}_{k=0}\alpha_{k}^{(N)}{\cal W}_{k+1}=0\ ,\label{genAW}
\eeqa
where the coefficients $\alpha_{k}^{(N)}\in {\mathbb C}$ depend on the representation (finite or infinite dimensional) on which the elements ${\textsf A},{\textsf A}^*$ act. Here, each element ${\cal W}_{-k},{\cal W}_{k+1}$ is considered as a polynomial in ${\textsf A},{\textsf A}^*$ thanks to the recursive formula (\ref{ansatz}) and using the homomorphism (\ref{homo}). 
For finite dimensional representations, explicit examples can be found in \cite{BK,BK1}. Starting from (\ref{genAW}), infinite dimensional representations of the $q-$Onsager algebra and related special functions can now be studied systematically using explicit realizations  of ${\textsf A},{\textsf A}^*$ - generalizing the Askey-Wilson one - in terms of $q-$difference operators acting on the space of symmetric functions. This subject which finds interesting application in the context of quantum integrable systems and Bethe equations will be discussed elsewhere.

\vspace{1cm}

\centerline{\underline{APPENDIX:} Proof of Proposition \ref{prop1}}

\vspace{0.5cm}

Having in mind the ordering (\ref{order}) and the fundamental relations (\ref{qo1}-\ref{qo11}), according to Example 1 and the results up to $k=5$ (not reported here), it is natural to propose for ${\cG}_{k+1}$ (similarly for ${\tilde\cG}_{k+1}$) the most general quadratic combination of lowest elements of even degree such that $\mathrm{d}[{\cG}_{k+1}]\leq 2k+2$. Namely, assume the relations (\ref{qo1}-\ref{qo11}) hold for $i,j\in {\mathbb Z_+}$. A linear combinations of terms $\cW_{-i}\cW_{j+1}$, $\cW_{i+1}\cW_{-j}$, $\cW_{-i}\cW_{-j}$, $\cW_{i+1}\cW_{j+1}$ with $i+j\leq k$ and $\cG_{i+1}\cG_{j+1}$, ${\tilde\cG}_{i+1}{\tilde\cG}_{j+1}$, ${\cG}_{i+1}{\tilde\cG}_{j+1}$, ${\tilde\cG}_{i+1}{\cG}_{j+1}$ with $i+j\leq k-1$ can then be considered in full generality, which explains the structure of the proposal (\ref{ansatz}). Now, the proof that (\ref{ansatz}) satisfy the relations (\ref{qo1}-\ref{qo11}) goes in two steps.

First, we show that (\ref{ansatz}) - and corresponding expression for $\tilde{\cG}_{k+1}$ - satisfy (\ref{qo1}-\ref{qo2}) - (\ref{qo1}-\ref{qo3}), respectively - and the coefficients $\{a^{(k)}_{ij},b^{(k)}_{ij},e^{(k)}_{ij}\}$ are uniquely determined. To this end, we will need the following relations which can be derived from (\ref{qo1})-(\ref{qo11}) (see \cite{BSh1} for details):
\ben
&&\big[{\cW}_{-i-1},{\cW}_{j+1}\big] - \big[{\cW}_{-i},{\cW}_{j+2}\big] = \frac{q-q^{-1}}{\rho(q+q^{-1})}\big({\cG}_{i+1}\tilde{{\cG}}_{j+1}-{\cG}_{j+1}\tilde{{\cG}}_{i+1}\big)\ ,\label{h1}\\
&&-{\cW}_{-i}{\cW}_{0} + {\cW}_{i+1}{\cW}_{1} - {\cW}_{-i-1}{\cW}_{1} + {\cW}_{0}{\cW}_{i+2} - \frac{1}{\rho(q^2-q^{-2})}\big[{\cG}_{i+1},{\tilde{\cG}}_{1}\big]=0\ ,\label{h2}\\
&&{\cW}_{-i-1}{\cW}_{-j} - {\cW}_{i+2}{\cW}_{j+1} - {\cW}_{-i}{\cW}_{-j-1} + {\cW}_{i+1}{\cW}_{j+2}\label{h3}\\
&&+{\cW}_{-i}{\cW}_{j+1} - {\cW}_{-j}{\cW}_{i+1} - {\cW}_{-i-1}{\cW}_{j+2} + {\cW}_{-j-1}{\cW}_{i+2}\nonumber\\
&& + \frac{1}{\rho(q^2-q^{-2})}
\big(\big[{\cG}_{i+2},{\tilde{\cG}}_{j+1}\big]-\big[{\cG}_{i+1},{\tilde{\cG}}_{j+2}\big]\big)=0\ .\nonumber
\een
In order to rewrite recursion relations, introduce the new operators
\ben
\cB_{ij}&=&\cW_{i+1}\cW_{-j}-\cW_{j+1}\cW_{-i}=\cW_{-j}\cW_{i+1}-\cW_{-i}\cW_{j+1}\\
\cC_{ij}&=&[\cW_{-i},\cW_{j+1}]=[\cW_{-j},\cW_{i+1}] \\
\cD_{ij}&=&\cG_{i+1}\tilde{\cG}_{j+1}-\cG_{j+1}\tilde{\cG}_{i+1}=\tilde{\cG}_{j+1}\cG_{i+1}-\tilde{\cG}_{i+1}\cG_{j+1}\\
\cF_{ij}&=&\cW_{i+1}\cW_{j+1}- \cW_{-i}\cW_{-j}+w(\tilde \cG_{j+1}\cG_{i+1}- \cG_{j+1}\tilde{\cG}_{i+1}) \ .
\een
Note that $\cB_{ij},\cD_{ij}$ are antisymmetric and $\cC_{ij}$ is symmetric. In terms of these, the recursion relations (\ref{h1})-(\ref{h3}) read
\ben
\cC_{i+1,j}&=&\cC_{i,j+1}+v\cD_{ij}\ ,\nonumber\\
\cF_{0,j}&=&\cB_{0j+1}+w\cD_{0j} \quad \mb{for} 0\leq j \ ,  \nonumber\\
\cB_{ij}&=&\cB_{i+1,j+1}+\cF_{i,j+1}-\cF_{i+1,j}+w(\cD_{i+1,j}+\cD_{j+1,i})\ ,\label{com}\\
 v&=&(q-q^{-1})^2 w=\frac{q-q^{-1}}{\rho(q+q^{-1})}\ .\nonumber
\een
First, let us consider equation (\ref{qo1}). Using the above combinations $\{\cB,\cC,\cF\}$ and (\ref{ansatz}),  eq. (\ref{qo1}) becomes:
\ben
(q+q^{-1})\cC_{0k}&=&\sum_{l=0}^{\alpha}\Big(\sum_{i=0}^{l-\bar{k}}(a^-_{i,2l+1-\bar{k}-i}\cB_{i,2l+1-\bar{k}-i}- a^+_{i,2l+1-\bar{k}-i}\cC_{i,2l+1-\bar{k}-i})-\delta_{\bar{k},1}\frac{a^+_{ll}}{2}\cC_{ll}\Big)\nonumber\\
&&+ (q^2-q^{-2})\sum_{l=0}^{\alpha-\bar{k}}\Big(\sum_{i=0}^{l} e^{(k)}_{i,2l+\bar{k}-i}\cF_{i,2l+\bar{k}-i}\Big)\nonumber
\een
with
\ben
a^\pm_{ij}&=&(a^{(k)}_{ij}-b^{(k)}_{ij})\pm(a^{(k)}_{ji}-b^{(k)}_{ji})\ .\nonumber
\een
Applying recursively the relations (\ref{com}), the parameters $\{a^{(k)}_{ij},b^{(k)}_{ij},e^{(k)}_{ij}\}$ are found to be restricted by  
the following constraints, with $l \in \{1,\dots,\alpha\}$:
\ben
\label{CS1}
&&\fs_l+\delta_{l\alpha}(q+q^{-1})=0, \quad \fp_{0l}-w(q^2-q^{-2})e^{(k)}_{0,2l-\bar{k}}=0\,, \quad  \bar{a}_{ik-i}=0 \mb{for} i \in \{0,\dots,\alpha-\bar{k}\}\ ,\nonumber\\
&&  \bar{a}_{l-1,l+\bar{k}}-(q^2-q^{-2})e^{(k)}_{l,l+\bar{k}}=0, \quad \delta_{\bar{k},1}(w\bar{a}_{l-1,l+1}-\fp_{l,l+1})=0\ , \\
&& w(\bar{a}_{i-1,2l+\bar{k}-i}-\bar{a}_{i,2l-1+\bar{k}-i})-\fp_{il+\bar{k}}=0  \mb{for}  0<i<l-\bar{k}\ ,\nonu
 &&\bar{a}_{i,2l-1-i}-\bar{a}_{i-1,2l+\bar{k}-i} +(q^2-q^{-2})e^{(k)}_{i,2l+\bar{k}-i}=0 \mb{for}  0<i<l-\bar{k}\ \nonumber
\een
where
\ben
&&\fs_l=\sum_{i=0}^{l-\bar{k}}\fa^+_{i,2l+1-\bar{k}-i}+\delta_{\bar{k},1}\frac{\fa^+_{l,l}}{2}\ , \quad \fp_{i,l}=v\big(\sum_{j=i+1}^{l-\bar{k}}\fa^+_{j,2l+1-\bar{k}-i}+\delta_{\bar{k}1}\frac{\fa^+_{l,l}}{2}\big)\ , \label{defa}\\
&&\bar{a}_{ij}=a^-_{ij}+\bar{a}_{i-1,j-1} \mb{for} 0<i<j \,,\nonu
&&\bar{a}_{0i}=a^-_{0i}+(q^2-q^{-2})e^{(k)}_{0,i-1}+\bar{a}_{0,i-2}\quad \mbox{for} \quad i>1+\bar{k} \ ,\nonu
&&
\bar{a}_{0,1+\bar{k}}= a^-_{0,1+\bar{k}}+(q^2-q^{-2})e^{(k)}_{0,\bar{k}}\ .\nonumber
\een

Consider now the equation (\ref{qo2}). Rewritting the $q-$commutator
\ben
[A,B]_q-[C,A]_q&=&(q-q^{-1})(B-C)A+[A,qB+q^{-1}C]\ ,
\een
one has
\ben
(q-q^{-1})(\cG_{k+1}-\tilde{\cG}_{k+1})\cW_0+[\cW_0,q\cG_{k+1}+q^{-1}\tilde{\cG}_{k+1}]=0\ .\label{eqgg}
\een
Introduce the operators
\ben
\A_{ij}=q\cA_{ij}+q^{-1}\tilde{\cA_{ij}},\quad
\FF_{ij}=q\cF_{ij}+q^{-1}\tilde{\cF_{ij}}\mb{and}
\GG_{i+1}=\tilde{\cG}_{i+1}-\cG_{i+1}\nonumber \ .
\een
Using (\ref{ansatz}) (and similarly for  $\tilde{\cG}_{k+1}$),  eq. (\ref{eqgg}) becomes:
\ben
(q-q^{-1})\GG_{k+1}\cW_0=\sum_{l=0}^{\alpha} \sum_{i+j=2l+1-\bar{k}}[\cW_0,\A_{ij}]+\sum_{l=0}^{\alpha-\bar{k}}\sum_{i+j=2l+\bar{k}, i\leq j}[\cW_0,\FF_{ij}\null]\ . \label{S2}
\een
According to the commutation relations (\ref{qo1})-(\ref{qo4}),  one has:
\ben
\null[\cW_0,\FF_{ij}]&=&e^{(k)}_{ij}(\cW_{-j-1}\GG_{i+1}+\GG_{j+1}\cW_{-i-1})\ ,\nonumber\\
\null [\cW_0,\A_{ij}]&=&\alpha_{ij}\cW_{-i}\GG_{j+1}+\bar{\alpha}_{ij}\GG_{i+1}\cW_{-j}\ 
\label{comS2}
\een
with
\ben
\alpha_{ij}=\frac{qa^{(k)}_{ij}+q^{-1} b^{(k)}_{ij}}{q+q^{-1}} \mb{and}
\bar{\alpha}_{ij}=\frac{q b^{(k)}_{ij}+q^{-1} a^{(k)}_{ij}}{q+q^{-1}}\ .\label{exp}
\een
Regrouping all terms of the form $\cW_{-j}\GG_{i+1}$ and $\GG_{j+1}\cW_{-i}$, eq. (\ref{qo2}) imposes additional constraints on the parameters $\{a^{(k)}_{ij},b^{(k)}_{ij},e^{(k)}_{ij}\}$. With $l \in \{1,\dots,\alpha\}$, they read:
\ben
\label{CS2}
&&\alpha_{00}=\bar{\alpha}_{00}=0\,, \, \bar{\alpha}_{2l+1-\bar{k},0}=\delta_{l,\alpha}(q-q^{-1})\ , \nonumber\\
&&\alpha_{2l+1-\bar{k}-i,i}+e^{(k)}_{i,2l-\bar{k}-i}=0 \, \mb{for} \,  i \in \{0,\dots,l-\bar{k}\} \ ,\nonu
&&\alpha_{2l+1-\bar{k}-i,i}=0 \, \mb{for} \,  i \in \{l+1-\bar{k},\dots,2l+1-\bar{k}\}\ ,\\
&&\bar{\alpha}_{2l+1-\bar{k}-i,i}+e^{(k)}_{i-1,2l+1-\bar{k}-i}=0 \, \mb{for} \,  i \in \{1,\dots,l+1-\bar{k}\} \mb{and} \nonu
&&\bar{\alpha}_{2l+1-\bar{k}-i,i}=0 \, \mb{for}  i \in \{l+2-\bar{k},\dots,2l+1-\bar{k}\}\ .\nonumber
\een

Now, combining the first (\ref{CS1}) and second (\ref{CS2}) set of constraints, the parameters $\{a^{(k)}_{ij},b^{(k)}_{ij}\}$ can be written solely in terms of $\{e^{(k)}_{ij}\}$ :
\ben
a^{(k)}_{2l+1-\bar{k},0}&=&-q^{-1}\delta_{l,\alpha}-\frac{q}{q-q^{-1}}e^{(k)}_{0,2l-\bar{k}}\ ,\quad
b^{(k)}_{2l+1-\bar{k},0}=q\delta_{l,\alpha}+\frac{q^{-1}}{q-q^{-1}}e^{(k)}_{0,2l-\bar{k}}\ ,\nonu
a^{(k)}_{2l+1-\bar{k}-i,i}&=&-\frac{1}{q-q^{-1}}\left(q e^{(k)}_{i,2l-\bar{k}-i}-q^{-1}e^{(k)}_{i-1,2l+1-\bar{k}-i}\right)  \mb{for}  i \in \{1,\dots,l-\bar{k}\}\ ,\nonu
b^{(k)}_{2l+1-\bar{k}-i,i}&=&-\frac{1}{q-q^{-1}}\left(qe^{(k)}_{i-1,2l+1-\bar{k}-i}-q^{-1}e^{(k)}_{i,2l-\bar{k}-i}\right) \mb{for}  i \in \{1,\dots,l-\bar{k}\}\ ,\\
a^{(k)}_{2l+1-\bar{k}-i,i}&=&b^{(k)}_{2l+1-\bar{k}-i,i}=0 \mb{for}  i \in \{l+2-\bar{k},\dots,2l+1-\bar{k}\}\ ,\nonu
a^{(k)}_{l,l+1-\bar{k}}&=&\frac{q^{-1}}{q-q^{-1}}e^{(k)}_{l-\bar{k},l},\quad b^{(k)}_{l,l+1-\bar{k}}=-\frac{q}{q-q^{-1}}e^{(k)}_{l-\bar{k},l}\ .\nonumber
\een
Plugging the expressions above in the definitions (\ref{exp}), we write:
\ben
a^\pm_{0,2l+1-\bar{k}}&=&\mp\left((q+q^{-1})\delta_{l,\alpha}+\frac{q+q^{-1}}{q-q^{-1}}e^{(k)}_{0,2l-\bar{k}}\right)\nonu
a^{\pm}_{2l+1-\bar{k}-i,i}&=&\mp\frac{q+q^{-1}}{q-q^{-1}}\left(e^{(k)}_{i,2l-\bar{k}-i}-e^{(k)}_{i-1,2l+1-\bar{k}-i}\right) \mb{for}  i \in \{1,\dots,l-1\},\nonu
a^{+}_{l,l+1-\bar{k}}&=&(1+\bar{k})\frac{q+q^{-1}}{q-q^{-1}}e^{(k)}_{l-\bar{k},l+1-\bar{k}}\nonu
a^{-}_{l,l+1-\bar{k}}&=&\delta_{\bar{k}1}\frac{q+q^{-1}}{q-q^{-1}}\left(2e^{(k)}_{l,l}-e^{(k)}_{l-1,l+1}\right)\ .\nonumber
\een 
According to (\ref{defa}), a straightforward calculation shows that the summation of coefficients $a^\pm_{i,j}$ drastically simplifies (\ref{CS1}). It reduces to the system of equations for $l \in \{1,\dots,\alpha\}$: 
\ben
&& \bar{a}_{ik-i}=0 \mb{for} i \in \{0,\dots,\alpha-\bar{k}\}\ ,\nonumber\\
&&  \bar{a}_{l-1,l+\bar{k}}-(q^2-q^{-2})e^{(k)}_{l,l+\bar{k}}=0\ , \label{sol} \\
\mb{and} &&  \bar{a}_{i,2l-1-i}-\bar{a}_{i-1,2l+\bar{k}-i} +(q^2-q^{-2})e^{(k)}_{i,2l+\bar{k}-i}=0 \mb{for}  0<i<l-\bar{k}\ . \nonumber
\een
Having in mind (\ref{defa}), the number of independent equations $\frac{(\alpha-\bar{k}+1)(\alpha-\bar{k}+2)}{2}$ coincides exactly with the number of parameters $\{e^{(k)}_{ij}\}$. Similar analysis applied to ${\tilde{G}_{k+1}}$ shows that equations (\ref{qo1}-\ref{qo3}) lead to the same constraints thanks to the symmetry between (\ref{qo2}) and (\ref{qo3}) under the operators' exchange ${\cW}_{-k}\leftrightarrow{\cW}_{k+1}$ and ${\cG}_{k+1}\leftrightarrow{\tilde{\cG}}_{k+1}$. Then, any higher element ${\cG}_{k+1}$ (and similarly for $\tilde{\cG}_{k+1}$) admits an expansion of the form (\ref{ansatz}), this combination solves (\ref{qo1}-\ref{qo2}) and is unique up to (\ref{qo1})-(\ref{qo11}) with $k\rightarrow i,l\rightarrow j$. \vspace{1mm}

It remains to show that (\ref{ansatz}) satisfies the remaining equations (\ref{qo6}-\ref{qo11}). In \cite{BSh1}, recall that the isomorphism between the reflection equation algebra associated with the $U_q(\widehat{sl_2})$ $R-$matrix and (\ref{qo1}-\ref{qo11}) was established. In particular, equations (\ref{qo1}-\ref{qo3}) and (\ref{qo4}) at $l=0$  are sufficient to determine uniquely the solution of the reflection equation algebra. Thanks to the isomorphism between the reflection equation algebra and $O_q(\widehat{sl_2})$, it follows that (\ref{qo6}-\ref{qo11}) are automatically satisfied if (\ref{sol}) are satisfied. This completes the proof of the proposal.

\vspace{0.5cm}
\noindent{\bf Acknowledgements:}  
S.B. thanks LMPT for hospitality, where part of this work has been done, and INFN Iniziativa Specifica FI11 for financial support. 
\vspace{0cm}

\vspace{0.2cm}

\end{document}